
\documentclass[11pt]{article} 

\usepackage[utf8]{inputenc} 
\usepackage{geometry} 
\geometry{letterpaper} 
\geometry{left=1in,right=1in,top=1in,bottom=1in} 
\usepackage{fancyhdr} 
\usepackage{amsthm}

\pagestyle{plain}

\usepackage[colorlinks]{hyperref} 

\usepackage[utf8]{inputenc} 
\usepackage[T1]{fontenc}
\usepackage{url}
\usepackage{ifthen}
\usepackage{cite}
\usepackage[cmex10]{amsmath} 
\usepackage{graphicx}
\usepackage{amssymb}
\usepackage{xcolor}

\usepackage{algorithm}
\usepackage{algorithmic}
\usepackage{amsmath}

\newtheorem{theorem}{Theorem}

\newtheorem{lemma}{Lemma}

\newtheorem{definition}{Definition}

\newcommand{\bu}{\boldsymbol{u}}
\newcommand{\ba}{\boldsymbol{\alpha}}
\newcommand{\diag}{\ensuremath{\operatorname{diag}}}

\begin{document}

\begin{center}

	{\bf{\LARGE{Graph-Based Ascent Algorithms for Function Maximization}}}

	\vspace*{.25in}

	\begin{tabular}{ccccc}
		{\large{Muni Sreenivas Pydi$^*$}} & \hspace*{.75in} & {\large{Varun Jog$^*$}} & \hspace*{.75in} & {\large{Po-Ling Loh$^{*\dagger}$}}\\
		{\large{\texttt{pydi@wisc.edu}}} & \hspace*{.75in} & {\large{\texttt{vjog@ece.wisc.edu}}} & \hspace*{.75in} & {\large{\texttt{loh@ece.wisc.edu}}}
			\end{tabular}
\begin{center}
Department of Electrical \& Computer Engineering$^*$\\
Department of Statistics$^\dagger$\\
University of Wisconsin - Madison\\
1415 Engineering Drive\\
Madison, WI 53706
\end{center}

	\vspace*{.2in}

	February 2018

	\vspace*{.2in}

\end{center}

\begin{abstract}
We study the problem of finding the maximum of a function defined on the nodes of a connected graph. The goal is to identify a node where the function obtains its maximum. We focus on local iterative algorithms, which traverse the nodes of the graph along a path, and the next iterate is chosen from the neighbors of the current iterate with probability distribution determined by the function values at the current iterate and its neighbors. We study two algorithms corresponding to a Metropolis-Hastings random walk with different transition kernels: (i) The first algorithm is an exponentially weighted random walk governed by a parameter $\gamma$. (ii) The second algorithm is defined with respect to the graph Laplacian and a smoothness parameter $k$. We derive convergence rates for the two algorithms in terms of total variation distance and hitting times. We also provide simulations showing the relative convergence rates of our algorithms in comparison to an unbiased random walk, as a function of the smoothness of the graph function. Our algorithms may be categorized as a new class of ``descent-based" methods for function maximization on the nodes of a graph.
\end{abstract}

\section{Introduction}

Social media and big data analysis have led to an explosion of graph-structured datasets in  diverse domains such as neuroscience, economics, sensor networks, and databases. In many important instances, the object of interest may be expressed as a function defined on the vertices of the graph. For example, in social network analysis, it is useful to identify a node or nodes with a maximal centrality (or ``influence") score relative to the other nodes, for the purpose of performing strategic interventions~\cite{chen2012identifying, Jac10}. In electrical network modeling, the edges of a graph correspond to connections in an electrical grid, and the amount of power available at each node may similarly be represented using a graph-structured function~\cite{Vis13}. This has led to a variety of statistical studies in graph function estimation, including Laplacian smoothing~\cite{KonLaf02, SmoKon03} and graph trend filtering~\cite{WanEtal16}, when noisy measurements of graph characteristics may be available.

In this paper, we focus on the problem of obtaining the maximum value of a function defined on the nodes of a graph. We assume a setting where only local information is available, and any algorithm must make decisions based on graph information obtained by traversing the nodes of the graph in a sequential manner. For instance, such a setting may arise in web crawling~\cite{brautbar2010local}, respondent-driven sampling~\cite{Hec97}, or other demographic studies~\cite{BanEtal14}. Although basic properties of Markov chains imply that an unbiased random walk on the nodes of a connected graph will eventually visit every node---hence produce a function maximum---we wish to use information about the value of the function on neighboring nodes to guide the path of the algorithm, borrowing ideas from the extensive literature on function optimization in the continuous domain.

For continuous optimization, several algorithms such as gradient descent, stochastic gradient descent, stochastic gradient Langevin dynamics, and momentum-based descent algorithms~\cite{Ber99, WelTeh11, Nes83} provide guarantees for finding local and global optima when the function is sufficiently smooth. Common classes of smooth functions include Lipschitz functions, convex functions, and Polyak-\L{}ojasiewicz functions~\cite{Nes13}.  Most of these notions, however, do not have direct analogs in the discrete domain of graphs. For instance, there is no obvious analog of a convex function on a graph~\cite{LinEtal11, Hir17}. In this paper, we take a step toward systematically studying function optimization on graphs by considering a certain class of smooth graph functions, where the notion of smoothness is motivated by the theory of band-limited graph signals. Such topics constitute an area of study in graph signal-processing~\cite{ShuEtal13}, which has become popular within the signal processing community in recent years~\cite{anis2016efficient,puy2016random}. The algorithms we propose exploit the underlying graph structure to construct efficient algorithms for maximizing such smooth graph functions, producing a promising analogy to gradient descent algorithms for continuous functions. Two key features of all our algorithms are that they are iterative and local; i.e., the random walk relies only on the function values of neighboring nodes when deciding its next step. This is a direct analogy to first-order optimization methods in continuous domains.

Our proposed algorithms employ random walk methods based on Monte Carlo approaches, specifically the Metropolis-Hastings (MH) algorithm~\cite{christian1999monte}. An important idea is to reformulate the problem of graph function maximization as a problem involving sampling vertices with high function values. Such strategies have recently gained traction in continuous domain optimization, as well~\cite{Dal17, ZhaEtal17, RagEtal17}. Starting with a proposal probability density on the vertices of the graph that is related to the original function values in an appropriate manner, we construct random walk transition kernels that converge to the desired proposal distribution. Our approach differs significantly from comparable approaches in the literature in the following way: In constructing the MH random walk kernel, we exploit the graph-connectivity information implied by the graph Laplacian matrix to construct rapidly-mixing walks that converge to the function maximum. Through hitting time analysis, we provide theoretical guarantees for the speed of convergence of these algorithms.


The remainder of the paper is organized as follows: In Section~\ref{SecBackground}, we introduce the mathematical formulation of the graph maximization problem and the types of local algorithms and smooth functions we will consider. In Section~\ref{SecAlgorithms}, we provide a formal statement of our proposed algorithms, which are analyzed rigorously in Section~\ref{SecTheory}. In Section~\ref{SecExperiments}, we provide simulations demonstrating the behavior of our algorithms. We conclude with a list of interesting open questions in Section~\ref{SecDiscussion}.

\paragraph{\textbf{Notation:}} We write $\|\cdot\|_{TV}$ to denote the total variation norm and $\|\cdot\|_2$ to denote the Euclidean norm of a vector. For a matrix $A \in \mathbb{R}^{n \times n}$, we write $\diag(A)$ to denote the $n \times n$ matrix with diagonal entries equal to the components of $A$ and all other entries equal to zero. We write ${\left \lfloor{\cdot}\right \rfloor }$ to denote the floor operator.


\section{Background and Problem Setup}
\label{SecBackground}

Consider an unweighted, undirected graph $G$ with vertex set $V$ of size $n$ and symmetric adjacency matrix $W$. For a node $i \in V$, we write $d_i$ to denote the degree of $i$ and $N(i)$ to denote the neighborhood set of $i$. We write $d_{\max} = \max_{i \in V} d_i$. We refer to a function $f: V \to \mathbb{R}$ as a \emph{graph function} defined on the vertices of $G$, and sometimes write $f_i$ to denote the function value $f(i)$. Our goal is to design a local algorithm that finds the maximum value of an unknown graph function in an efficient manner.

\subsection{Local Search Algorithms}

Formally, we define a local search algorithm as a discrete time Markov process, where the states of the process are the $n$ nodes, and a Markov chain in state $X_t$ at time $t$ chooses the next state $X_{t+1}$ among the neighbors of $X_t$, with transition probability distribution defined as a function of the value $f(X_t)$ and the set of values $\{f(V_t): V_t \in N(X_t)\}$ at neighboring nodes in the graph.

We briefly mention some related work. Borgs et al.~\cite{BorEtal12} studied the problem of finding the first node in a preferential attachment network, where the algorithm is given access to all nodes in a frontier of radius $r$ around the set of previously queried nodes. Frieze and Pegden~\cite{FriPeg17} proposed an alternative algorithm of finding the root of a preferential attachment graph, where the algorithm is given access to a sorted list of the highest-degree neighbors of the current iterate. Brautbar and Kearns~\cite{brautbar2010local} studied algorithms for finding nodes with high degrees or high clustering coefficients using a ``jump and crawl" method, where a crawl step consists of randomly querying a neighbor of the existing iterate, and a jump step allows the algorithm to query a uniformly sampled node from the vertex set of the graph. Note, however, that our methods differ from the studies of Borgs et al.~\cite{BorEtal12} and Frieze and Pegden~\cite{FriPeg17} due to the fact that we are interested in different classes of graph functions besides the age function (indeed, the age of the nodes in the graph may not constitute a smooth function), and from the study of Brautbar and Kearns~\cite{brautbar2010local} due to the fact that successive steps may only visit immediate neighbors, rather than jumping to an entirely different node in the graph.

\subsection{Metropolis-Hastings Algorithm}

Many of our algorithms are based on versions of the Metropolis-Hastings Algorithm, which we review in this section. Given a state space $V$ and a target probability density $p_f(\cdot)$, the MH algorithm constructs a Markov chain with transition probability matrix $\mathbf{P}$ whose stationary distribution $\pi$ is the same as the target probability density (i.e., $\pi(i) = p_f(i) \ \forall \ i \in V$). The matrix $\mathbf{P}$ is defined in terms of a ``proposal'' distribution $\mathbf{Q}$, as follows:
\begin{equation}\label{eq_MHKernelFormula}
\mathbf{P}_{ij} = 
\begin{cases}
\mathbf{Q}_{ij} R(i,j), & j \neq i \\
1 - \sum_{j\neq i}\mathbf{P}_{ij}, & j = i,
\end{cases}
\end{equation}
where 
\begin{equation}\label{eq_MH_Rij_Formula}
R(i,j) = \min\left(1, \; \frac{p_f(j) \mathbf{Q}_{ji}}{p_f(i) \mathbf{Q}_{ij}} \right).
\end{equation}
The general idea behind the MH algorithm is that successive steps are guided by the proposal distribution, such that the move from node $i$ to node $j$ depends on the ratio $\frac{p_f(j)}{p_f(i)}$, but is also moderated by the proposal distribution according to the ratio $\frac{Q_{ji}}{Q_{ij}}$. Known results in probability theory guarantee the convergence of the MH algorithm to the target density~\cite{GriSti01}.

In addition to generating samples from a desired distribution, however, the MH algorithm is also a valuable tool in stochastic optimization~\cite{christian1999monte}. The key idea is that if a function $f$ with domain $V$ is encoded into an appropriate density $p_f$, such that larger values of $f$ correspond to larger values of $p_f$, the random walk generated by the MH algorithm will be more likely to visit nodes with larger density values, hence larger values of $f$.

The \emph{independent MH algorithm} corresponds to the choice of proposal distribution $\mathbf{Q}_{ij} = p_g(j)$, where $p_g$ is some probability measure over the state space $V$. In Section~\ref{SecAlgorithms}, we present a variety of algorithms for function maximization based on running the MH algorithm with different choices of proposal distributions and target densities, for which we may obtain provable guarantees for the rate of convergence.

\subsection{Smooth Graph Functions}

We wish to leverage information about local changes in the graph function across edges in the graph to guide the movement of an algorithm. Intuitively, this will lead to improvements in the rate of convergence of an algorithm to the graph maximum when the function satisfies certain smoothness properties, as a rough analog of the convergence results of gradient descent for smooth functions on a real domain. In order to make these notions rigorous, we now define the smooth class of graph functions to be analyzed in our paper, borrowing from the literature on graph signal processing.

The unnormalized graph Laplacian for a graph $G$ with adjacency matrix $W$ is given by $L = D - W$, where $D = \diag\left(\sum_{j=1}^n W_{ij}\right)$ is the degree matrix. Since the Laplacian is symmetric, we can calculate the eigendecomposition of $L$ as $U\Lambda U^T$, where $U \in \mathbb{R}^{n \times n}$ is an orthonormal matrix whose $i^\text{th}$ column $\bu_i$ is the eigenvector corresponding to the eigenvalue $\lambda_i$, and $\Lambda$ is a diagonal matrix with sorted eigenvalues $0 = \lambda_1 \leq \cdots \leq \lambda_n$. 

Since $U$ is unitary, the columns of $U$ form an orthogonal basis for any graph function $f$. In the basis of Laplacian eigenvectors, the function $f$ may be written as $f = U\hat f$, where $\hat f$ is defined as the graph Fourier transform $\hat f = U^T f$. The $i^\text{th}$ column $u_i$ of $U$ is the $i^\text{th}$ Fourier mode of the graph, corresponding to the $i^\text{th}$ smallest graph frequency, also equal to $\sqrt{\lambda_i}$ \cite{ShuEtal13}.

Intuitively, a ``smooth" graph function is a function that does not vary much across vertices that are connected by edges. This notion of smoothness is captured by the term $f^T L f$, as shown in the following identity \cite{von2007tutorial}:
\begin{equation}
f^T L f = \frac{1}{2}\sum_{i,j=1}^n w_{ij}(f_i-f_j)^2.
\end{equation}
For a smooth function, the sum of squared differences $(f_i-f_j)^2$ across all the connected vertices is small, implying that $f^T L f$ is small. Representing $f$ in the Fourier basis, we have
\begin{align*}
f^T L f = (\hat{f}^T U^T)U\Lambda U^T (U \hat f)
= \hat{f}^T \Lambda \hat f
= \sum_{i=1}^n \lambda_i \hat{f_i}^2.
\end{align*}
Since the $\lambda_i$'s are arranged in ascending order, setting the higher-order Fourier coefficients $\hat{f_i}$'s close to zero leads to a small value for $f^T L f$, implying that $f$ is smooth. We have the following definition:
\begin{definition}
A function $f: G \to \mathbb R$ is called \emph{$k$-smooth} if there exists a vector $\ba \in \mathbb{R}^k$ such that $f = U_k \ba$, where $U_k \in \mathbb{R}^{n \times k}$ is the matrix consisting of the first $k$ columns of $U$. 
\end{definition}

In general, any graph function $f$ can be decomposed as $f = f_{ks} + f_r$, where $f_{ks} = U_k U_k^T f$ is the $k$-smooth component and $f_r$ is the residual component. Note that $f_r \bot f_{ks}$, so
\begin{equation*}
\|f\|_2^2 = \|f_{ks}\|_2^2 + \|f_r\|_2^2.
\end{equation*}
In graph signal processing, a $k$-smooth function is viewed as a band-limited graph signal restricted to the lowest $k$ Fourier modes of the graph. The vector $\ba$ represents the projection of $f$ onto the lowest $k$ graph frequencies~\cite{puy2016random}. 

Denote $\delta_i$ to be the vector with $1$ in the $i^\text{th}$ coordinate and zeros elsewhere. We then have the following definition:
\begin{definition}
The term $\|U_k^T \delta_i\|_2$ is referred to as the {\em local cumulative coherence of order $k$} (LC-$k$) at node $i$ \cite{tremblay2016compressive}. The LC-$k$ quantity satisfies
\begin{equation}\label{eq_cumcoh}
\|U_k^T \delta_i\|_2 
= \frac{\|U_k^T \delta_i\|_2}{\|\delta_i\|_2}
=\frac{\|U_k^T \delta_i\|_2}{\|U^T \delta_i\|_2}.
\end{equation}
\end{definition}

From equation~\eqref{eq_cumcoh}, we see that $\|U_k^T \delta_i\|_2$ gives the proportion of the energy of the graph impulse function $\delta_i$ that is concentrated on the top $k$ Fourier modes \cite{puy2016random}. The LC-$k$ varies between $0$ and $1$.  Whereas we do not usually associate $\delta_i$ with smoothness, it may happen that for certain graph topologies, the value of LC-$k$ at node $i$ is close to 1, implying that $\delta_i$ is an approximately $k$-smooth function on the graph. Although LC-$k$ cannot be computed locally, methods exist for computing the value approximately, without having to perform an eigendecomposition of a potentially large graph Laplacian matrix~\cite{puy2016random,tremblay2016compressive}.
\begin{figure}
\centering
\label{fig:fig_smoothexample}
\caption{Examples of randomly generated smooth functions with $k=10$, $k=20$, and $k=30$ on a $32\times 32$ 2D grid graph of $n=1024$ nodes. The eigenvectors $u_{10}$, $u_{20}$ and $u_{30}$ are plotted in the first row. A higher intensity on a pixel indicates a higher positive value for the vector at that node. Note that as expected, functions corresponding to a smaller value of $k$ have fewer undulations.}
\includegraphics[scale=.8]{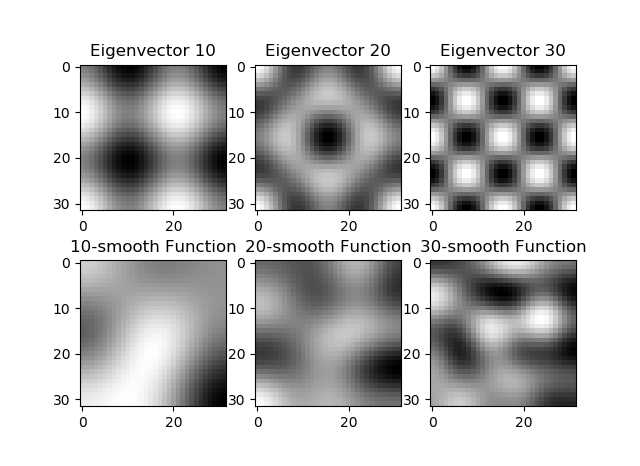}
\end{figure}


\section{Algorithms}
\label{SecAlgorithms}

We now describe the local algorithms that we will compare in this paper. A theoretical analysis of the relative convergence rates is provided in the next section.

\subsection{Vanilla Random Walk}

\begin{algorithm}[h!]
\caption{\texttt{Vanilla RW}}
\begin{algorithmic}
\label{alg_rw}
\REQUIRE Local access to graph $G$ (i.e., at every node, all its neighbors are accessible) and number of iterations $T$.
\begin{enumerate}
\item \textbf{initialize}:\\
\ \ \ \ $i \leftarrow \text{Unif}\{1,\ldots, n\}$ (Uniform sampling from $V$)\\
\ \ \ \ $i_{\max} \leftarrow i$\\
\ \ \ \ $f_{\max} \leftarrow f_i$ \\
\item
\textbf{repeat} for $T$ iterations:\\
\ \ \ \ $i \leftarrow \text{Unif}\{j: w_{ij}=1 \}$\\
\ \ \ \ \textbf{if} $f_i > f_{\max}$: \\
\ \ \ \ \ \ \ \ $f_{\max} \leftarrow f_i$\\
\ \ \ \ \ \ \ \ $i_{\max} \leftarrow i$\\
\textbf{return} $i_{\max}$
\end{enumerate}
\end{algorithmic}
\end{algorithm}

We begin by formally writing out the algorithm for an unbiased random walk on the vertices of the graph. The random walk proceeds around the nodes of the graph that are connected by edges. The transition kernel for the vanilla random walk is given by $P = D^{-1}W$. Thus, the transition probability from node $i$ to node $j$ is given by $P_{ij} = \frac{w_{ij}}{d_i}$, where $w_{ij}=1$ if an edge exists between $i$ and $j$, and $w_{ij} = 0$ otherwise.

The vanilla random walk is agnostic to the function $f$, and converges to a stationary distribution that is proportional to the degrees of the graph nodes  \cite{aldous2002reversible}. Note, however, that we cannot immediately conclude that such a function-agnostic random walk will not optimize the function efficiently---the class of $k$-smooth functions on graphs does not have a simple description, and it may happen that optimizers of such functions lie at high-degree nodes, which are precisely those nodes that attract the vanilla random walk.

\subsection{Exponentially-Weighted Walk}

The second algorithm involves biasing the choices of each move in the vanilla random walk according to the function values on the neighboring nodes. We choose an exponential weight function, so that the target probability density is defined by $p_f(i) \propto \exp\left(\gamma f_i\right)$. Starting with a proposal distribution $Q = D^{-1}W$, we use the MH method in equation~\eqref{eq_MHKernelFormula} to construct the transition kernel for the exponentially-weighted walk, as follows:
\begin{equation}\label{eq_ExpRW_Kernel}
\mathbf{P}_{ij} = 
\begin{cases}
\frac{1}{d_i} w_{ij} R(i,j) & j \neq i, \\
1 - \sum_{j\neq i}\mathbf{P}_{ij} & j = i,
\end{cases}
\end{equation}
where
\begin{align*}
R(i,j) 
= \min\left(1, \; \frac{p_f(j) (1/d_j)}{p_f(i) (1/d_i)} \right)
= \min\left(1, \; e^{\gamma(f_j - f_i)}\frac{d_i}{d_j} \right).
\end{align*}
Thus, the transition probability between any two nodes in the exponentially-weighted walk depends on the difference in function value $f_i-f_j$ and the degrees of the nodes. The parameter $\gamma$ determines the ``peakiness" of the target density $p_f$.

Unlike in the vanilla random walk, we assume that at every node, it is possible to access the function value and the degree of its neighbors. However, we do not make use of the smoothness constraint on the function. The parameter $\gamma$ may be viewed as controlling the greedy nature of this algorithm. When $\gamma = +\infty$, the random walk always ascends; i.e., if $f(j) < f(i)$, then $P(i,j) = 0$. However, such a greedy walk is susceptible to getting stuck at local maxima and failing to find the global maxima quickly. At the other extreme, when $\gamma = 0$, the exponential random walk is function-agnostic (as in the vanilla random walk), and converges to the uniform distribution over vertices.

\begin{algorithm}[h!]
\caption{\texttt{Exponential RW($\gamma$)}}
\begin{algorithmic}
\label{alg_rw_gamma}
\REQUIRE Local access to graph $G$ and function $f$ (i.e., at every node, all its neighbors and their function values are accessible), number of iterations $T$, and tuning parameter $\gamma$.
\begin{enumerate}
\item \textbf{initialize}:\\
\ \ \ \ $i \leftarrow \text{Unif}\{1,\ldots, n\}$ (Uniform sampling from $V$)\\
\ \ \ \ $i_{\max} \leftarrow i$\\
\ \ \ \ $f_{\max} \leftarrow f_i$ \\
\item
\textbf{repeat} for $T$ iterations:\\
\ \ \ \ Generate $j$ according to $\mathbf{P}_{ij}$ given in equation~\eqref{eq_ExpRW_Kernel}\\
\ \ \ \ $i \leftarrow j$\\
\ \ \ \ \textbf{if} $f_i > f_{\max}$: \\
\ \ \ \ \ \ \ \ $f_{\max} \leftarrow f_i$\\
\ \ \ \ \ \ \ \ $i_{\max} \leftarrow i$\\
\textbf{return} $i_{\max}$
\end{enumerate}
\end{algorithmic}
\end{algorithm}


\subsection{Graph Laplacian Walk}

For the third algorithm, we use a proposal distribution that is derived from the graph Laplacian, instead of the vanilla random walk proposal distribution $Q = D^{-1}W$ that we used for the exponentially-weighted walk.

Let $U_k$ represent the matrix containing $k$ eigenvectors corresponding to the smallest $k$ eigenvalues of the graph Laplacian. Let $\delta_i \in \{0,1\}^n$ represent the indicator vector for  node $i$. The proposal distribution for the graph Laplacian walk is given by
\begin{equation}\label{eq_LRW_Q'_ksmth}
\mathbf{Q^\prime}_{ij} = 
\frac{\|U_k^T \delta_j\|_2^2}{\sum_{j:w_{ij}=1} \|U_k^T \delta_j\|_2^2} w_{ij}.
\end{equation}
Note that the Laplacian-based proposal distribution $Q^\prime$ is indeed ``local," since $Q^\prime_{ij} = 0$ whenever $Q_{ij}=0$. 
For the Laplacian walk, we choose a target probability density according to $p_f(i) \propto f_i^2$. The squaring of the function restricts our analysis to maximizing positive functions.
Using the MH method in equation~\eqref{eq_MHKernelFormula}, we derive the transition kernel for graph Laplacian walk, as follows:
\begin{equation}\label{eq_LRW_KernelFormula}
\mathbf{P}_{ij} = 
\begin{cases}
Q_{ij}^\prime R(i,j) & j \neq i, \\
1 - \sum_{j\neq i}\mathbf{P}_{ij} & j = i,
\end{cases}
\end{equation}
where we set $R(i,j) = 0$ if $w_{ij} = 0$; and if $w_{ij} = 1$, we have
\begin{align}\label{eq_LRW_RijFormula}
R(i,j) 
= \min\left(1, \frac{p_f(j) \mathbf{Q^\prime}_{ji}}{p_f(i) \mathbf{Q^\prime}_{ij}} \right)
= \min\left(1,
\frac{f_j^2}{f_i^2}
\frac{\|U_k^T \delta_i\|_2^2}{\|U_k^T \delta_j\|_2^2}
\frac{\sum_{j:w_{ij}=1} \|U_k^T \delta_j\|_2^2}{\sum_{i:w_{ji}=1} \|U_k^T \delta_i\|_2^2}
\right).
\end{align}

\begin{algorithm}[h!]
\caption{\texttt{Laplacian RW(k)}}
\begin{algorithmic}
\label{alg_rw_k}
\REQUIRE Function smoothness $k$, local access to graph $G$, function $f$ and LC-$k$ (i.e., at every node, all its neighbors, their function values, and their LC-$k$ values are accessible/computable), and number of iterations $T$
\begin{enumerate}
\item \textbf{initialize}:\\
\ \ \ \ $i \leftarrow \text{Unif}\{1,\ldots, n\}$ (Uniform sampling from $V$)\\
\ \ \ \ $i_{\max} \leftarrow i$\\
\ \ \ \ $f_{\max} \leftarrow f_i$ \\
\item
\textbf{repeat} for $T$ iterations:\\
\ \ \ \ Generate $j$ according to $\mathbf{P}_{ij}$ given in equation~\eqref{eq_LRW_KernelFormula}\\
\ \ \ \ $i \leftarrow j$\\
\ \ \ \ \textbf{if} $f_i > f_{\max}$: \\
\ \ \ \ \ \ \ \ $f_{\max} \leftarrow f_i$\\
\ \ \ \ \ \ \ \ $i_{\max} \leftarrow i$\\
\textbf{return} $i_{\max}$
\end{enumerate}
\end{algorithmic}
\end{algorithm}

An important difference between the exponentially-weighted walk and the graph Laplacian walk is that the former algorithm is truly a local algorithm. Indeed, only the function values of the neighboring nodes are needed to determine the probability distribution for the next step of Algorithm~\ref{alg_rw_gamma}. This differs from Algorithm~\ref{alg_rw_k}, for which we assume knowledge of the graph Laplacian (or at least its top $k$ eigenvectors) in order to compute $\|U_k^T \delta_i\|_2^2$ for the neighboring nodes of every iterate.






We now describe a variant of Algorithm~\ref{alg_rw_k} applicable to approximately smooth functions. We begin with a definition:
\begin{definition}
A function $f$ is $\epsilon$-approximately $k$-smooth if for the decomposition 
$f = f_{ks} + f_r$ where $f_{ks} = U_k^T U_k f$, we have
\begin{align*}
|f_r(i)| \leq \epsilon \|f_{ks}\|_2,
\end{align*}
for all $i \in V$ and some $\epsilon>0$.
\end{definition}

For the class of $\epsilon$-approximate $k$-smooth functions, the proposal distribution for the graph Laplacian walk given in equation~\eqref{eq_LRW_Q'_ksmth} is replaced by
\begin{equation}\label{eq_LRW_Q'_epsAprxKsmth}
\mathbf{Q^\prime}_{ij} = 
\frac{\left(\|U_k^T \delta_j\|_2+\epsilon\right)^2}{\sum_{j:w_{ij}=1}  \left(\|U_k^T \delta_j\|_2 +\epsilon\right)^2} w_{ij}.
\end{equation}
The transition kernel for the Laplacian walk will remain the same as in equation~\eqref{eq_LRW_KernelFormula}, with the modified $Q_{ij}^\prime$ in equation~\eqref{eq_LRW_Q'_epsAprxKsmth}.


\section{Theoretical Analysis}
\label{SecTheory}

In this section, we provide results concerning the convergence rate of the local search algorithms. For both of our algorithms, we derive bounds on (i) the total variation distance between the probability distribution at time $t$ and the stationary distribution of the Markov chain; and (ii) bounds on the hitting time of the algorithm in expectation and in high probability. A practical consequence of the total variation bounds is that if we run the corresponding local algorithms for sufficiently many steps, we are guaranteed that taking the maximizer of the empirically constructed distribution will be provably close to the function maximizer. The hitting time bounds may be interpreted as a bound on the amount of time needed to first visit a maximum---thus, if we halt the algorithms after a prescribed number of steps and choose the node with the largest function value so far, we are guaranteed to obtain a maximum with high probability. In fact, if we are given slightly more knowledge (i.e., that we have located a function maximizer upon visiting it for the first time), the theorems provide stronger guarantees for a variant of the local algorithms that halt once they identify a maximum.



\subsection{Exponentially-Weighted Walk}

In this section, we analyze the exponentially-weighted random walk in Algorithm \ref{alg_rw_gamma}.

\subsubsection{Convergence}

In the first theorem, we show the convergence of the exponentially-weighted random walk to the target density $p_f$ in total variation norm.

\begin{theorem}[Convergence of Algorithm \ref{alg_rw_gamma}]\label{thm_convExpRW}
For a connected graph $G$ with diameter $r$, and for a graph function $f$ that is nonzero on all vertices, the rate of convergence of the random walk proposed in Algorithm \ref{alg_rw_k} to its stationary distribution $p_f$ is given by
\begin{equation}\label{eq_conv_alg_rw_gamma}
|| \mathbf{P}^t_{i*} - p_f||_{TV} 
\leq 
\Bigg( 1 - \frac{\delta_f^{r-1}}{(d_{max}\Delta_f)^r}\Bigg)
^{\left \lfloor{\frac{t}{r}}\right \rfloor },  \ \ \forall \ i \in V,
\end{equation}
where $\mathbf{P}^t_{i*}$ is the $i^\text{th}$ row of $\mathbf{P}^t$ (the transition probability matrix after $t$ steps), $\delta_f = \min_i p_f(i)$, $\Delta_f = \max_i p_f(i)$.
\end{theorem}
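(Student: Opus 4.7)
The plan is to prove a Doeblin-type minorization for the $r$-step kernel $\mathbf{P}^r$ and then combine it with the standard coupling bound to obtain geometric contraction in total variation. Specifically, I want to show that for every $i \in V$,
\[
\mathbf{P}^r(i,\cdot) \;\geq\; \alpha\, p_f(\cdot), \qquad \alpha := \frac{\delta_f^{\,r-1}}{(d_{\max}\Delta_f)^r},
\]
because a geometric coupling over successive blocks of $r$ steps then yields $\|\mathbf{P}^t_{i*} - p_f\|_{TV} \leq (1-\alpha)^{\lfloor t/r\rfloor}$; the residual $t - r\lfloor t/r\rfloor$ steps only reduce the distance, since total variation is non-increasing under the application of any stochastic kernel.

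The first technical step is to lower bound each nonzero off-diagonal entry of the MH kernel. Rewriting
\[
\mathbf{P}_{ij} \;=\; \min\!\left(\frac{1}{d_i},\; \frac{p_f(j)}{p_f(i)\,d_j}\right)
\]
for an edge $(i,j)$, and using $p_f(i) \leq \Delta_f$ together with $d_i, d_j \leq d_{\max}$, each term inside the min is at least $p_f(j)/(d_{\max}\Delta_f)$, giving the uniform edge bound $\mathbf{P}_{ij} \geq p_f(j)/(d_{\max}\Delta_f)$. The hypothesis that $f$ is nonzero on every vertex enters here, both to ensure $\delta_f > 0$ and to keep the MH acceptance ratio well-defined.

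The second step chains this inequality along a walk of length exactly $r$. For any target $v \in V$, the diameter assumption provides a walk $i = v_0, v_1, \ldots, v_\ell = v$ with $\ell \leq r$, which I would pad out to length $r$ by oscillating between $v$ and a fixed neighbor (or by absorbing one step in a rejection-induced self-loop at $v$ when $r - \ell$ is odd). Multiplying the per-edge bounds along this walk yields
\[
\mathbf{P}^r(i,v) \;\geq\; \prod_{s=1}^{r}\frac{p_f(v_s)}{d_{\max}\Delta_f} \;\geq\; \frac{\delta_f^{\,r-1}\, p_f(v)}{(d_{\max}\Delta_f)^r},
\]
where $r-1$ of the intermediate factors $p_f(v_s)$ are bounded below by $\delta_f$ and the final factor $p_f(v)$ is retained. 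This is the minorization with constant $\alpha$ and minorizing measure $p_f$, and the standard Doeblin coupling construction (splitting $\mathbf{P}^r$ as $\alpha p_f + (1-\alpha)K$ for a residual stochastic kernel $K$) then converts it into the claimed geometric total-variation bound.

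I expect the main obstacle to be the parity-padding in the second step: in a bipartite graph, shortest path plus an even number of oscillations may fail to produce a walk of length exactly $r$, so one must absorb the extra step via a Metropolis-Hastings rejection self-loop, which is strictly positive whenever some acceptance ratio on an edge incident to $v$ is less than $1$. Handling this non-degeneracy condition cleanly for every target $v$ simultaneously, without spoiling the clean constant $\delta_f^{\,r-1}/(d_{\max}\Delta_f)^r$, is the delicate part; the remainder of the proof is essentially bookkeeping with products of the single-edge bound.
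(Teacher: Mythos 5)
Your proposal follows essentially the same route as the paper's proof: the single-edge lower bound $\mathbf{P}_{ij} \geq p_f(j)/(d_{\max}\Delta_f)$, chaining along paths to obtain the $r$-step Doeblin minorization $\mathbf{P}^r(i,\cdot) \geq \alpha\, p_f(\cdot)$ with $\alpha = \delta_f^{r-1}/(d_{\max}\Delta_f)^r$, and the splitting $P^r = (1-\theta)P_f + \theta B$ iterated over blocks of $r$ steps. The paper carries out that last iteration via the explicit induction $P^{rs} = (1-\theta^s)P_f + \theta^s B^s$ (borrowed from Theorem 4.9 of Levin--Peres), which is just the algebraic form of the geometric coupling you describe, and it handles the residual $m = t - r\lfloor t/r\rfloor$ steps by right-multiplying by $P^m$ rather than by invoking monotonicity of TV under stochastic kernels; these are interchangeable.

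The one place you depart from the paper is the padding step, and your worry there is not a mere technicality --- it exposes a genuine gap that the paper's proof silently skips. The paper asserts that ``in all cases'' $\mathbf{P}^r_{ij} \geq \alpha\, p_f(j)$ after noting only that $i$ and $j$ are joined by a path of length at most $r$, with no argument that a positive-probability walk of length \emph{exactly} $r$ exists. Your proposed repairs (oscillating along an edge to fix even deficits, absorbing an odd deficit in a rejection-induced self-loop) cover many cases but, as you suspect, not all: if the graph is bipartite and every acceptance ratio equals one, there is no self-loop anywhere and no odd-length closed walk, so the minorization fails. This is not recoverable by a cleverer argument, because the theorem itself fails there. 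Take the $4$-cycle with $f \equiv 1$: the MH chain degenerates to the simple random walk, which is periodic with period $2$ and never converges to $p_f$; meanwhile $r = 2$, $d_{\max} = 2$, $\delta_f = \Delta_f = 1/4$ make the claimed bound equal to $0$ for $t \geq 2$, while the true TV distance is $1/2$ for all $t$. So the delicate step you identify cannot be closed as stated --- one needs an added aperiodicity or laziness assumption (or a restriction on $t$'s parity) --- and your instinct to isolate it as the crux of the proof is correct, even though the paper's own write-up passes over it.
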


\begin{proof}
For any two vertices $i$ and $j$ connected by an edge, we have
\begin{align*}
P_{ij} 
&= Q_{ij}\min\Big( 1, \frac{p_f(j) Q_{ji}}{p_f(i) Q_{ij}} \Big)\\
&= \frac{1}{d_i}\min\Big( 1, \frac{p_f(j) d_i}{p_f(i) d_j} \Big)\\
&= \min\Big( \frac{1}{d_i}, \frac{p_f(j)}{p_f(i) d_j} \Big)\\
&= p_f(j)\min\Big( \frac{1}{d_ip_f(j)}, \frac{1}{d_j p_f(i)} \Big)\\
&\geq \frac{1}{d_{max}\Delta_f}p_f(j) \\
& \geq \frac{\delta_f}{d_{max}\Delta_f},
\end{align*}
For any two vertices $i$ and $j$ that are not connected by an edge, it is possible to find at most $r-1$ vertices that connect $i$ to $j$, since the diameter of the graph is $r$. Hence, in all cases,
\begin{align*}
\mathbf{P}^r_{ij}
\geq \left(\frac{\delta_f}{d_{max}\Delta_f}\right)^{r-1}\frac{1}{d_{max}\Delta_f} p_f(j)
= \frac{\delta_f^{r-1}}{(d_{max}\Delta_f)^r} p_f(j),
\end{align*}
implying that
\begin{align*}
|| \mathbf{P}^{rs+m}_{i*} - p_f||_{TV}
= \theta^s || (Q^{s} P^m)_{i*} - p_f||_{TV}
\leq \theta^s.
\end{align*}
We now adopt a proof technique from Theorem 4.9 in Levin and Peres~\cite{levin2017markov}. Define $P_f \in \mathbb{R}^{n \times n}$ such that $P_f(i,j) = p_f(j)$. We may write
\[
P^r = (1-\theta)P_f + \theta B,
\]
where $\theta = 1 - \frac{\delta_f^{r-1}}{(d_{max}\Delta_f)^r}$ and $Q$ is a stochastic matrix. 
By induction, we will prove that 
\begin{equation}\label{eq_p^t1}
P^{rs} = (1-\theta^{s})P_f + \theta^s B^{s}.
\end{equation}
Assuming equation~\eqref{eq_p^t1} is true for $s$, we have
\begin{align*}
P^{r(s+1)} = P^{rs} P^r 
&= (1-\theta^s)P_fP^r + \theta^s B^s((1-\theta)P_f + \theta B)\\
&= (1-\theta^s)P_f + (1-\theta)\theta^sB^s P_f + \theta^{s+1}B^{s+1}\\
&= (1-\theta^{s+1})P_f + \theta^{s+1} B^{s+1},
\end{align*}
since $P_f P^r = P_f B = P_f P_f  = P_f$.
Multiplying equation~\eqref{eq_p^t1} by $P^m$, where $0 \leq m < r$, we then have
\begin{align*}
P^{rs+m} = (1-\theta^{s})P_f P^m + \theta^s B^{s} P^m
= P_f + \theta^s(B^{s} P^m - P_f),
\end{align*}
implying that
\begin{align*}
P^{rs+m} - P_f = \theta^s(B^{s} P^m - P_f).
\end{align*}
Hence,
\begin{align*}
\|\mathbf{P}^{rs+m}_{i*} - p_f\|_{TV}
= \theta^s \|(B^{s} P^m)_{i*} - p_f\|_{TV}
\leq \theta^s.
\end{align*}
Taking $t = rs+m$, where 
$s = {\left \lfloor{\frac{t}{r}}\right \rfloor }$, completes the proof.
\end{proof}

\subsubsection{Results on Hitting Times}

In the next theorem, we prove an upper bound on the expected number of steps it takes for Algorithm \ref{alg_rw_gamma} to reach the function maximum.

\begin{theorem}[Expected hitting time]\label{Thm_exphit_ExpRW}
For Algorithm \ref{alg_rw_gamma}, let $v_t \in V$ be the state at time $t$, and define $T_{hit} = \min\{t\geq 0: f_{v_t}=f_{max}\}$ to be the number steps it takes for Algorithm \ref{alg_rw_gamma} to reach the function maximum. Let $f_{min}$ denote the minimum value of the function $f$. Then the expected value of the hitting time is bounded by
\begin{equation}\label{eq_hittime_main}
\mathbb{E} T_{hit} \leq d_{max}^r e^{\gamma(r-1)(f_{max}-f_{min})}.
\end{equation}
\end{theorem}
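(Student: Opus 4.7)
The plan is to extract from the proof of Theorem~\ref{thm_convExpRW} a uniform lower bound on the $r$-step probability that the walk lands at a global maximizer, and then convert that bound into an expected hitting-time estimate by a standard block (``chunking'') argument.

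First, I would fix any vertex $i^\star \in V$ with $f(i^\star) = f_{max}$ and reuse the intermediate $r$-step bound already derived in the proof of Theorem~\ref{thm_convExpRW}: for every starting vertex $i$, $\mathbf{P}^r_{i i^\star} \geq \frac{\delta_f^{r-1}}{(d_{max}\Delta_f)^r}\, p_f(i^\star)$. Using $p_f(i^\star) = \Delta_f$ together with the identity $\delta_f/\Delta_f = e^{-\gamma(f_{max}-f_{min})}$ (which follows because $p_f(v) \propto e^{\gamma f_v}$ and the normalizing constants cancel in the ratio), this simplifies to $\mathbf{P}^r_{i i^\star} \geq p$, where $p := d_{max}^{-r}\, e^{-\gamma(r-1)(f_{max}-f_{min})}$, independently of the starting state~$i$.

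Next, I would partition time into disjoint blocks of $r$ consecutive steps: $[0,r), [r,2r), \ldots$. By the Markov property applied at the start of each block, the conditional probability of \emph{failing} to land at $i^\star$ at the end of a given block, given the past, is at most $1 - p$. Therefore the index $N$ of the first block whose endpoint is $i^\star$ is stochastically dominated by a $\mathrm{Geometric}(p)$ random variable, and $T_{hit} \leq rN$. Taking expectations yields $\mathbb{E} T_{hit} \leq r/p = r \cdot d_{max}^r\, e^{\gamma(r-1)(f_{max}-f_{min})}$, which matches equation~\eqref{eq_hittime_main} up to the block-length factor $r$.

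The main obstacle is closing that factor-$r$ gap to recover the stated bound exactly. The recycling of the $r$-step lower bound and the geometric-domination step are both essentially forced, so any sharpening must come from not discarding information within a block. The most natural route is a finer geodesic-tracking argument: fix a shortest path from the current iterate to $i^\star$ (of length at most $r$) and, at each \emph{single} step, lower-bound the probability of advancing one edge along that path by $\delta_f/(d_{max}\Delta_f)$, the per-edge bound already obtained in the proof of Theorem~\ref{thm_convExpRW}. The hitting time is then dominated by a sum of at most $r$ independent geometric random variables whose expectations can be combined more tightly than the crude block bound, plausibly shaving the superfluous factor of $r$.
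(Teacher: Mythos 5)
Your route is genuinely different from the paper's. The paper does not use a block/geometric-domination argument at all: it invokes the occupation-measure identity $p_f(j)\,\mathbb{E}_i T_j = Z_{jj}-Z_{ij}$ with $Z_{ij}=\sum_{t=0}^{\infty}(P^t_{ij}-p_f(j))$ (Lemma 2.12 of Aldous--Fill), bounds each summand via the decomposition $P^t-P_f=\theta^{\lfloor t/r\rfloor}(B^{\lfloor t/r\rfloor}P^{t\bmod r}-P_f)$ from the proof of Theorem~\ref{thm_convExpRW}, sums the geometric series to get $p_f(j)\,\mathbb{E}_i T_j\le \frac{1}{1-\theta}$, and then sets $j=i_{\max}$ so that $p_f(j)=\Delta_f$, which yields $\frac{1}{(1-\theta)\Delta_f}=d_{max}^r e^{\gamma(r-1)(f_{max}-f_{min})}$. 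Your first two paragraphs are correct as stated and are the more elementary argument, but they deliver $\mathbb{E}T_{hit}\le r\, d_{max}^r e^{\gamma(r-1)(f_{max}-f_{min})}$, i.e.\ the claimed bound~\eqref{eq_hittime_main} times $r$. It is worth noting that the paper's own summation writes $\sum_{t=0}^{\infty}\theta^t$ where the exponent should be $\lfloor t/r\rfloor$; done carefully, $\sum_{t=0}^{\infty}\theta^{\lfloor t/r\rfloor}=r/(1-\theta)$, so the occupation-measure route, too, picks up the same factor of $r$. Your factor-$r$ loss is therefore not a defect of your method relative to the paper's---the two approaches are equally tight when both are executed rigorously.

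The genuine gap is in your proposed repair. The geodesic-tracking argument fails because the walk is not confined to the chosen shortest path: when the single-step ``advance'' event (probability at least $\delta_f/(d_{max}\Delta_f)$) does not occur, the chain does not stay put and retry---it typically moves to some other neighbor, possibly one at strictly greater distance from $i^{\star}$, after which the geodesic and the remaining number of required advances both change. Consequently $T_{hit}$ is not stochastically dominated by a sum of $r$ independent geometric random variables; the distance-to-target process can increase, and controlling it would require a drift or electrical-network argument that is not available from the per-edge bound alone. If you want the theorem exactly as stated, the clean path is your first two paragraphs combined with the Aldous identity in place of the block bound (accepting, as the paper implicitly does, the $\sum_t\theta^t$ summation); if you insist on a fully rigorous derivation, state the conclusion with the extra factor of $r$, which is what both arguments actually prove.
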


\begin{proof}
From the proof of Theorem \ref{alg_rw_k}, we have
\begin{align}\label{eq_p^t-pf}
P^{t} - P_f = \theta^s\left(Q^{s} P^m - P_f\right),
\end{align}
for $\theta = 1 - \frac{\delta_f^{r-1}}{(d_{max}\Delta_f)^r}$, and $t = rs+m$ and 
$s = {\left \lfloor{\frac{t}{r}}\right \rfloor }$.
Define hitting time for state $i$ as $T_i = \min\{t\geq 0: v_t=i\}$. Let $\mathbb{E}_i T_j$ denote the expected hitting time for state $j$ given that the MH Markov chain starts at state $i$.
From Lemma 2.12 in Aldous~\cite{aldous2002reversible}, we have
\[
p_f(j) \mathbb{E}_i T_j = Z_{jj} - Z_{ij},
\]
where $Z_{ij} = \sum_{t=0}^\infty (P_{ij}^t - p_f(j))$. Using equation~\eqref{eq_p^t-pf}, we then have
\begin{align*}
p_f(j) \mathbb{E}_i T_j = \sum_{t=0}^\infty (P_{jj}^t - P_{ij}^t) 
 = \sum_{t=0}^\infty \theta^t \left((Q^{s} P^m)_{jj}^t - (Q^{s} P^m)_{ij}^t\right)
& \leq \sum_{t=0}^\infty \theta^t 
= \frac{1}{1-\theta}
 = \frac{(d_{max}\Delta_f)^r}{\delta_f^{r-1}}.
\end{align*}
Hence,
\begin{equation*}
\mathbb{E}_i T_j \leq \frac{(d_{max}\Delta_f)^r}{\delta_f^{r-1} p_f(j)},
\end{equation*}
and
\begin{align*}
\mathbb{E}_i T_{hit} 
\leq \frac{(d_{max}\Delta_f)^r}{\delta_f^{r-1} \Delta_f} 
= d_{max}^r \left( \frac{\Delta_f}{\delta_f} \right)^{r-1} 
= d_{max}^r e^{\gamma(r-1)\left(f_{max}-f_{min}\right)}.
\end{align*}
\end{proof}

We now derive a high-probability bound on the hitting time.

\begin{theorem}[High-probability bound on hitting time]\label{thm_highprob_ExpRW}
For any initial distribution $\mu$ and any $s>0$, we have
\begin{equation}\label{eq_ubhit}
\mathbb{P}_{\mu}[T_{hit} > t] \leq \left(\frac{t^{\star}_{hit}}{s}\right)^{\lfloor t/s \rfloor },
\end{equation}
where $t^{\star}_{hit} = d_{max}^r e^{\gamma(r-1)(f_{max}-f_{min})}$ and $t>0$.
\end{theorem}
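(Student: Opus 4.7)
The plan is to combine the expected hitting time bound from Theorem \ref{Thm_exphit_ExpRW} with Markov's inequality, then iterate via the strong Markov property to get the geometric decay in $\lfloor t/s \rfloor$.

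First, I would apply Markov's inequality. By Theorem \ref{Thm_exphit_ExpRW}, for every starting state $i \in V$ we have $\mathbb{E}_i T_{hit} \leq t^{\star}_{hit}$, so
\begin{equation*}
\mathbb{P}_i[T_{hit} > s] \leq \frac{\mathbb{E}_i T_{hit}}{s} \leq \frac{t^{\star}_{hit}}{s}.
\end{equation*}
Because this bound holds uniformly over the starting vertex, it extends to any initial distribution $\mu$: $\mathbb{P}_\mu[T_{hit}>s]\le t^\star_{hit}/s$.

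Next, I would invoke the (strong) Markov property to chain together $\lfloor t/s\rfloor$ independent ``failure'' windows of length $s$. Let $k=\lfloor t/s\rfloor$. Conditioning on $X_{s}$ and using that $\{T_{hit}>2s\}\subseteq\{T_{hit}>s\}$,
\begin{align*}
\mathbb{P}_\mu[T_{hit}>2s]
&= \sum_{j\in V}\mathbb{P}_\mu[X_s=j,\, T_{hit}>s]\,\mathbb{P}_j[T_{hit}>s] \\
&\leq \frac{t^{\star}_{hit}}{s}\,\mathbb{P}_\mu[T_{hit}>s]
\leq \left(\frac{t^{\star}_{hit}}{s}\right)^{2},
\end{align*}
and iterating this argument $k$ times yields $\mathbb{P}_\mu[T_{hit}>ks] \leq (t^{\star}_{hit}/s)^{k}$. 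Since $\{T_{hit}>t\}\subseteq\{T_{hit}>ks\}$ for $k=\lfloor t/s\rfloor$, the claimed inequality \eqref{eq_ubhit} follows.

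The only mild subtlety is the Markov-property step: one has to make sure the event $\{T_{hit}>s\}$ is $\mathcal{F}_s$-measurable (it is, as a function of $X_0,\ldots,X_s$) so that conditioning on $X_s$ and restarting the chain from $j$ is legitimate, and that the bound $\mathbb{P}_j[T_{hit}>s]\le t^\star_{hit}/s$ is independent of $j$. Both points are immediate from the first paragraph, since the Markov bound on $\mathbb{E}_i T_{hit}$ was uniform in $i$. There are no real computational obstacles; the result is essentially a restart-and-iterate wrapper around Theorem \ref{Thm_exphit_ExpRW}. A useful remark I would add at the end is that choosing $s=e\, t^\star_{hit}$ gives $\mathbb{P}_\mu[T_{hit}>t]\le e^{-\lfloor t/(e\, t^\star_{hit})\rfloor}$, making the geometric nature of the tail bound explicit.
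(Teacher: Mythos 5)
Your proof is correct and follows essentially the same route as the paper: apply Markov's inequality to the uniform-in-$i$ expected hitting time bound from Theorem~\ref{Thm_exphit_ExpRW} to control each window of length $s$, then chain the windows via the Markov property and induct. The paper phrases the chaining as a bound on $\mathbb{P}_{\mu}[T_{hit} > ms \mid T_{hit} > (m-1)s] \leq \max_j \mathbb{P}_j(T_{hit} > s)$, which is the same argument you give by conditioning on $X_s$.
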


\begin{proof}
For any integer $m\geq 1$, we have
\begin{align*}
\mathbb{P}_{\mu}[T_{hit} > ms \mid T_{hit} > (m-1)s] &
 \leq \max_j \mathbb{P}_j (T_{hit} > s)
 \leq \frac{\mathbb{E}_j(T_{hit})}{s} = \frac{t_{hit}^\star}{s}.
\end{align*}
By induction on $m$, we obtain 
$\mathbb{P}_{\mu}[T_{hit} > ms] \leq \left(\frac{t^{\star}_{hit}}{s}\right)^m$, which implies inequality~\eqref{eq_ubhit}.
\end{proof}

\subsection{Graph Laplacian Walk}

We begin by proving a few lemmas concerning the proposal distribution $Q^\prime$ used in Algorithm~\ref{alg_rw_k} for $k$-smooth and $\epsilon$-approximately $k$-smooth functions. We establish the following envelope condition:
\begin{equation}\label{eq_env}
\frac{1}{M}p_f(j) \le Q^\prime_{ij},
\end{equation}
for an appropriate constant $M$, when $w_{ij}=1$.

\begin{lemma}[Dominance for $k$-smooth graph functions]\label{lem_dom_ksmth}
Suppose $f$ is $k$-smooth. For the proposal distribution $Q^\prime$ given in equation~\eqref{eq_LRW_Q'_ksmth}, the envelope condition~\eqref{eq_env} holds with $M=k$.
\end{lemma}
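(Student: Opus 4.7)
The plan is to show the envelope condition by bounding the numerator of $Q'_{ij}$ from below in terms of $f_j^2$, and bounding the denominator from above by $k$, both using properties of the matrix $U_k$ with orthonormal columns.

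First, I would invoke the $k$-smoothness hypothesis to write $f = U_k \ba$ for some $\ba \in \real^k$. Then for any node $j$, since $\delta_j^T U_k$ is just the $j$-th row of $U_k$, we have $f_j = \delta_j^T U_k \ba = (U_k^T \delta_j)^T \ba$. Applying Cauchy--Schwarz gives $f_j^2 \leq \|U_k^T \delta_j\|_2^2 \, \|\ba\|_2^2$. Because $U_k$ has orthonormal columns, $\|\ba\|_2^2 = \|U_k \ba\|_2^2 = \|f\|_2^2$, and therefore
\begin{equation*}
p_f(j) \;=\; \frac{f_j^2}{\sum_l f_l^2} \;\leq\; \|U_k^T \delta_j\|_2^2.
\end{equation*}
This is the key bound on the numerator appearing in $Q'_{ij}$.

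Next I would bound the denominator $\sum_{j': w_{ij'}=1} \|U_k^T \delta_{j'}\|_2^2$ from above by the unrestricted sum $\sum_{j'=1}^n \|U_k^T \delta_{j'}\|_2^2$. A standard trace manipulation then gives
\begin{equation*}
\sum_{j'=1}^n \|U_k^T \delta_{j'}\|_2^2
\;=\; \sum_{j'=1}^n \delta_{j'}^T U_k U_k^T \delta_{j'}
\;=\; \operatorname{tr}(U_k U_k^T)
\;=\; \operatorname{tr}(U_k^T U_k)
\;=\; k,
\end{equation*}
since $U_k^T U_k = I_k$. Combining the two bounds, for $w_{ij}=1$,
\begin{equation*}
Q'_{ij} \;=\; \frac{\|U_k^T \delta_j\|_2^2}{\sum_{j': w_{ij'}=1} \|U_k^T \delta_{j'}\|_2^2} \;\geq\; \frac{\|U_k^T \delta_j\|_2^2}{k} \;\geq\; \frac{p_f(j)}{k},
\end{equation*}
which is exactly the envelope condition \eqref{eq_env} with $M=k$.

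I do not foresee a serious obstacle here; the argument is essentially a Cauchy--Schwarz bound plus a trace identity. The only mildly subtle point is recognizing that one should apply Cauchy--Schwarz in the $k$-dimensional coefficient space (using $f = U_k \ba$) rather than in $\real^n$, so that the cumulative coherence $\|U_k^T \delta_j\|_2^2$ naturally appears and can be paired with the sum-to-$k$ identity. The same structure will carry over to the $\epsilon$-approximate case with a perturbed version of each of these two estimates.
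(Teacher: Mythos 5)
Your proposal is correct and follows essentially the same route as the paper's proof: Cauchy--Schwarz in the $k$-dimensional coefficient space to get $p_f(j) \leq \|U_k^T \delta_j\|_2^2$, then bounding the neighborhood sum by the full sum $\sum_{j=1}^n \|U_k^T \delta_j\|_2^2 = k$. You merely make explicit the trace identity that the paper uses implicitly.
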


\begin{proof}
Since $f$ is $k$-smooth, we have $f = U_k \ba$ for some $\ba \in \mathbb{R}^k$. It follows that $f_i = \delta_i^T U_k \ba = \langle U_k^T\delta_i, \ba\rangle$. Hence, 
\begin{equation*}
f_i^2 = \langle U_k^T\delta_i, \ba\rangle^2 \leq \|U_k^T \delta_i\|_2^2 \|\ba\|_2^2,
\end{equation*}
implying that
\begin{align*}
\sum_{i \in V} f_i^2 = \|f\|_2^2 = \|U_k^T \ba\|_2^2 = \|\ba\|_2^2.
\end{align*}
Thus, for $j\neq i$, we have
\begin{align*}
p_f(i) = \frac{f_j^2}{\sum_{i \in V} f_i^2}
\leq \|U_k^T \delta_i\|_2^2.
\end{align*}
Using equation~\eqref{eq_LRW_Q'_ksmth}, for $w_{ij}=1$, we then have
\begin{align*}
Q_{ij}^\prime 
= \frac{\|U_k^T \delta_j\|_2^2}{\sum_{j:w_{ij}=1} \|U_k^T \delta_j\|_2^2}
\geq \frac{p_f(j)}{\sum_{j:w_{ij}=1} \|U_k^T \delta_j\|_2^2}
\geq \frac{p_f(j)}{\sum_{j=1}^n \|U_k^T \delta_j\|_2^2}
= \frac{1}{k}p_f(j).
\end{align*}
\end{proof}

\begin{lemma}[Dominance for $\epsilon$-approximate $k$-smooth graph functions]\label{lem_dom_apprx}
Suppose $f$ is $\epsilon$-approximate $k$-smooth. For the proposal distribution $Q^\prime$ given in equation~\eqref{eq_LRW_Q'_epsAprxKsmth}, the envelope condition~\eqref{eq_env} holds with
\begin{equation}
M = k + 2k\sqrt{n}\epsilon + n\epsilon^2.
\end{equation}
\end{lemma}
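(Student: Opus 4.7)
The plan is to mirror the structure of the proof of Lemma~1, replacing the clean identity $f_i^2 \le \|U_k^T \delta_i\|_2^2 \|\ba\|_2^2$ with a perturbed version that handles the residual $f_r$. First, I would write $f = f_{ks} + f_r$ with $f_{ks} = U_k U_k^T f$, set $\ba = U_k^T f$, and observe that $f_{ks}(i) = \langle U_k^T \delta_i, \ba\rangle$, so by Cauchy--Schwarz $|f_{ks}(i)| \le \|U_k^T \delta_i\|_2 \|\ba\|_2 = \|U_k^T \delta_i\|_2 \|f_{ks}\|_2$. Combined with the hypothesis $|f_r(i)| \le \epsilon \|f_{ks}\|_2$, the triangle inequality gives
\begin{equation*}
|f_i| \le \bigl(\|U_k^T \delta_i\|_2 + \epsilon\bigr)\|f_{ks}\|_2,
\end{equation*}
which is precisely the estimate that motivates the new proposal distribution in equation~\eqref{eq_LRW_Q'_epsAprxKsmth}.

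Next, I would upper bound $p_f(j)$. Orthogonality of $f_{ks}$ and $f_r$ gives $\sum_\ell f_\ell^2 = \|f_{ks}\|_2^2 + \|f_r\|_2^2 \ge \|f_{ks}\|_2^2$, so
\begin{equation*}
p_f(j) = \frac{f_j^2}{\sum_\ell f_\ell^2} \le \frac{(\|U_k^T \delta_j\|_2 + \epsilon)^2 \|f_{ks}\|_2^2}{\|f_{ks}\|_2^2} = \bigl(\|U_k^T \delta_j\|_2 + \epsilon\bigr)^2.
\end{equation*}
Then, exactly as in Lemma~1, for $w_{ij}=1$ I would enlarge the denominator in~\eqref{eq_LRW_Q'_epsAprxKsmth} by summing over all nodes rather than just neighbors of $i$, yielding
\begin{equation*}
Q^\prime_{ij} \ge \frac{(\|U_k^T \delta_j\|_2 + \epsilon)^2}{\sum_{j'=1}^n (\|U_k^T \delta_{j'}\|_2 + \epsilon)^2} \ge \frac{p_f(j)}{\sum_{j'=1}^n (\|U_k^T \delta_{j'}\|_2 + \epsilon)^2}.
\end{equation*}

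It remains to show that the denominator is at most $M = k + 2k\sqrt{n}\epsilon + n\epsilon^2$. Expanding the square gives three terms: $\sum_{j'} \|U_k^T \delta_{j'}\|_2^2$, $2\epsilon \sum_{j'} \|U_k^T \delta_{j'}\|_2$, and $n\epsilon^2$. The first is $\operatorname{tr}(U_k U_k^T) = k$, as was already used implicitly in Lemma~1. For the cross term, Cauchy--Schwarz yields $\sum_{j'} \|U_k^T \delta_{j'}\|_2 \le \sqrt{n \sum_{j'} \|U_k^T \delta_{j'}\|_2^2} = \sqrt{nk}$, and since $k \ge 1$ we have $\sqrt{k} \le k$, so $2\epsilon\sqrt{nk} \le 2k\sqrt{n}\epsilon$. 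Summing the three bounds gives exactly the claimed $M$.

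The main obstacle, if any, is the small but necessary sleight of hand in step two: it is tempting to try to bound $p_f(j)$ directly in terms of neighbors or to use $\|f\|_2^2$ in the denominator of $Q^\prime$, but neither yields a clean constant. The key observation that unlocks the argument is the one-line estimate $\sum_\ell f_\ell^2 \ge \|f_{ks}\|_2^2$ coming from the Pythagorean decomposition $f = f_{ks}+f_r$ with $f_r \perp f_{ks}$. After that, only routine bookkeeping and a single application of Cauchy--Schwarz (plus the trivial inequality $\sqrt{k} \le k$ for $k \ge 1$) separate us from the stated constant.
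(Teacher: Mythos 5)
Your proposal is correct and follows essentially the same route as the paper's proof: bound $|f_i|$ by $(\|U_k^T\delta_i\|_2+\epsilon)\|\ba\|_2$ via Cauchy--Schwarz and the residual hypothesis, lower-bound the normalizer by $\|f_{ks}\|_2^2$ using orthogonality, enlarge the denominator of $Q^\prime_{ij}$ to a sum over all nodes, and bound that sum by $M$. If anything, your write-up is cleaner than the paper's, which contains some index typos and leaves the cross-term estimate $\sum_{j}\|U_k^T\delta_j\|_2 \le \sqrt{nk}\le k\sqrt{n}$ implicit where you spell it out.
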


\begin{proof}
Let $f_{ks} = U_k^T\ba$ for some $\ba \in \mathbb{R}^k$.Then
\begin{align*}
f(i)^2 
= (f_{ks}(i) + f_r(i))^2 
\leq \left(\langle U_k^T\delta_i, \ba\rangle + \epsilon \|f_{ks}\|_2\right)^2
\leq \left(\|U_k^T \delta_i\|_2 \|\ba\|_2+\epsilon \|\ba\|_2\right)^2,
\end{align*}
implying that
\[
\sum_{i \in V} f(i)^2 
= \|f\|_2^2 = \|f_{ks}\|_2^2 + \|f_r\|_2^2.
\geq ||\ba||_2^2
\]
Thus, for $j\neq i$, we have
\begin{align*}
p_f(i) = \frac{f_j^2}{\sum_{i \in V} f_i^2}
\leq \left(\|U_k^T \delta_i\|_2+\epsilon\right)^2.
\end{align*}
From equation~\eqref{eq_LRW_Q'_epsAprxKsmth}, for $j\neq i$ and $w_{ij}=1$, we have
\begin{align*}
Q_{ij}^\prime
& =\frac{\left(\|U_k^T \delta_j\|_2+\epsilon\right)^2}{\sum_{j:w_{ij}=1} \left(\|U_k^T \delta_j\|_2 +\epsilon\right)^2}
\geq \frac{p_f(i)}{{\sum_{j=1}^n \left(\|U_k^T \delta_j\|_2+\epsilon\right)^2}} \\
& \geq \frac{p_f(i)^2}{k+n\epsilon^2 
+ 2\epsilon\left(\sqrt{n}\sum_{i\in V} \|U_k^T \delta_i\|_2^2\right)}
=\frac{p_f(i)^2}{M}.
\end{align*}
\end{proof}

\subsubsection{Convergence}

The first theorem concerns convergence in TV distance.

\begin{theorem}[Convergence of Algorithm \ref{alg_rw_k}]
For a connected graph $G$ with diameter $r$, and for a $k$-smooth graph function $f$ (either exact or $\epsilon$-approximate) that is nonzero on all the vertices,
the rate of convergence of the random walk proposed in Algorithm~\ref{alg_rw_k} to its stationary distribution $p_f$ is given by
\begin{equation}\label{eq_conv_alg_rw_k}
\|\mathbf{P}^t_{i*} - p_f\|_{TV} 
\leq 
\Bigg( 1 - \frac{\delta_f^{r-1}}{M^r}\Bigg)
^{\left \lfloor{\frac{t}{r}}\right \rfloor },  \ \ \forall \ i \in V,
\end{equation}
where $\mathbf{P}^t_{i*}$ is the $i$th row of $\mathbf{P}^t$, $\delta_f = \min_i p_f(i)$, and $M$ is the dominance constant established in Lemmas~\ref{lem_dom_ksmth} and~\ref{lem_dom_apprx}.
\end{theorem}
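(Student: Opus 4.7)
My plan is to follow the template of the proof of Theorem~\ref{thm_convExpRW} almost verbatim, with the only new ingredient being the envelope condition $Q^\prime_{ij} \ge \frac{1}{M} p_f(j)$ established in Lemmas~\ref{lem_dom_ksmth} and~\ref{lem_dom_apprx}. The constant $M$ will be $k$ in the exact $k$-smooth case and $k + 2k\sqrt{n}\epsilon + n\epsilon^2$ in the $\epsilon$-approximate case, but the remainder of the argument does not distinguish between the two regimes.

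First I would obtain a single-step lower bound on $\mathbf{P}_{ij}$ for adjacent $i$ and $j$. Rewriting the Metropolis--Hastings kernel in equation~\eqref{eq_LRW_KernelFormula} as
\begin{equation*}
\mathbf{P}_{ij} \;=\; p_f(j)\, \min\!\left(\frac{Q^\prime_{ij}}{p_f(j)},\ \frac{Q^\prime_{ji}}{p_f(i)}\right),
\end{equation*}
both ratios inside the minimum are bounded below by $1/M$ (applying the envelope condition to the pairs $(i,j)$ and $(j,i)$), so $\mathbf{P}_{ij} \ge p_f(j)/M \ge \delta_f/M$ whenever $w_{ij}=1$. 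Because $G$ has diameter $r$, every pair of vertices is joined by a path of length at most $r$, so chaining this bound along such a path yields $\mathbf{P}^r_{ij} \ge \frac{\delta_f^{r-1}}{M^r}\, p_f(j)$ for all $i,j \in V$.

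Setting $\theta = 1 - \frac{\delta_f^{r-1}}{M^r}$, I would then write $\mathbf{P}^r = (1-\theta) P_f + \theta B$ for a stochastic matrix $B$, where $P_f$ has every row equal to $p_f$. The induction identity $\mathbf{P}^{rs} = (1-\theta^s) P_f + \theta^s B^s$ now follows exactly as in Theorem~\ref{thm_convExpRW}, using $P_f \mathbf{P}^r = P_f B = P_f P_f = P_f$. Multiplying both sides by $\mathbf{P}^m$ for $0 \le m < r$ and taking the total variation norm of the $i$th row gives $\|\mathbf{P}^{rs+m}_{i*} - p_f\|_{TV} \le \theta^s$, and setting $s = \lfloor t/r \rfloor$ delivers inequality~\eqref{eq_conv_alg_rw_k}.

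The main obstacle is really just the single-step bound: I need to verify that the derivation of $\mathbf{P}_{ij} \ge p_f(j)/M$ above uses only the abstract envelope inequality and no property specific to the exact $k$-smooth case, so that Lemma~\ref{lem_dom_apprx} can be dropped in without modification. Once this is confirmed, the structural portion of the argument (the splitting of $\mathbf{P}^r$, the inductive identity, and the TV bound) is independent of the smoothness class and reuses the machinery of Theorem~\ref{thm_convExpRW} unchanged.
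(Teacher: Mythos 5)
Your proposal matches the paper's proof essentially step for step: the single-step bound $\mathbf{P}_{ij} = p_f(j)\min\bigl(Q^\prime_{ij}/p_f(j),\, Q^\prime_{ji}/p_f(i)\bigr) \ge p_f(j)/M$ via the envelope condition applied to both ordered pairs, the chaining over a diameter-$r$ path, and the reuse of the $P^r = (1-\theta)P_f + \theta B$ decomposition and induction from Theorem~\ref{thm_convExpRW} with the modified $\theta$. Your one stated concern is resolved exactly as you suspect: the argument uses only the abstract inequality $Q^\prime_{ij} \ge p_f(j)/M$ for $w_{ij}=1$, so both lemmas drop in interchangeably.
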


\begin{proof}
From equations~\eqref{eq_LRW_KernelFormula} and~\eqref{eq_LRW_RijFormula}, for any two vertices $i$ and $j$ connected by an edge, we have
\begin{align*}\label{eq_lwconvproof1}
P_{ij} 
=Q^\prime_{ij}
\min\left(1, \frac{p_f(j) Q^\prime_{ji}}{p_f(i) Q^\prime_{ij}} \right)
= p_f(j)
\min\left(\frac{Q^\prime_{ij}}{p_f(j)}, \frac{Q^\prime_{ji}}{p_f(i)} \right)
\geq \frac{1}{M} p_f(j),
\end{align*}
where the last inequality follows from Lemmas~\ref{lem_dom_ksmth} and~\ref{lem_dom_apprx}.

For any two vertices $i$ and $j$ that are not connected by an edge, it is possible to find at most $r-1$ vertices that connect $i$ to $j$, since the diameter of the graph is $r$. This leads to the following inequality:
\begin{align*}
\mathbf{P}^r_{ij}
\geq \Big(\frac{\delta_f}{M}\Big)^{r-1}\frac{1}{M} p_f(j)
= \frac{\delta_f^{r-1}}{M^r} p_f(j).
\end{align*}
Similar to proof of Theorem~\ref{thm_convExpRW}, define $P_f \in \mathbb{R}^{n \times n}$ such that $P_f(i,j) = p_f(j)$. We then write
\[
P^r = (1-\theta)P_f + \theta B,
\]
where $\theta = 1 - \frac{\delta_f^{r-1}}{M^r}$ and $B$ is a stochastic matrix. Following a similar analysis as that in Theorem~\ref{thm_convExpRW} with the modified $\theta$ completes the proof.
\end{proof}

\subsubsection{Results on Hitting Times}

In the next theorem, we prove an upper bound on the expected number of steps it takes for Algorithm \ref{alg_rw_k} to reach the function maximum.

\begin{theorem}[Expected hitting time]\label{Thm_exphit_LW}
For Algorithm \ref{alg_rw_k}, suppose $v_t \in V$ be the state at time $t$, and define $T_{hit} = \min\{t\geq 0: f_{v_t}=f_{max}\}$. The expected value of the hitting time is bounded by
\begin{equation}\label{eq_hittime_main1}
\mathbb{E} T_{hit} \leq \frac{(M||f||^2)^r}{f_{\max}^2 f_{\min}^{2(r-1)}}.
\end{equation}
\end{theorem}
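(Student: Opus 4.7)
The plan is to mirror the argument used for Theorem~\ref{Thm_exphit_ExpRW}, since Algorithm~\ref{alg_rw_k} is a Metropolis--Hastings chain with exactly the same structural form as the exponentially-weighted walk, differing only in the proposal $Q^\prime$ and the target $p_f(i)\propto f_i^2$. The convergence theorem just proved for Algorithm~\ref{alg_rw_k} already supplies the core ingredient: the decomposition $P^{rs+m}-P_f = \theta^s(B^s P^m - P_f)$ with mixing parameter $\theta = 1 - \frac{\delta_f^{r-1}}{M^r}$, where $M$ is the dominance constant from Lemmas~\ref{lem_dom_ksmth} and~\ref{lem_dom_apprx}. So effectively all that changes relative to the exponential-weighted case is the value of $\theta$ and the formulas for $\delta_f$ and $\Delta_f$ in terms of $f_{\min}$, $f_{\max}$, and $\|f\|_2$.

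First, I would apply Lemma 2.12 of Aldous~\cite{aldous2002reversible} to write $p_f(j)\,\mathbb{E}_i T_j = Z_{jj}-Z_{ij}$ with $Z_{ij}=\sum_{t=0}^\infty(P^t_{ij}-p_f(j))$. The decomposition above lets me bound each summand in modulus by $\theta^{\lfloor t/r\rfloor}$, and grouping the sum into blocks of length $r$ gives a geometric-series bound of the same form as in Theorem~\ref{Thm_exphit_ExpRW}, namely
\[
p_f(j)\,\mathbb{E}_i T_j \;\le\; \frac{1}{1-\theta} \;=\; \frac{M^r}{\delta_f^{r-1}},
\]
so that $\mathbb{E}_i T_j \le \frac{M^r}{\delta_f^{r-1}\, p_f(j)}$ for every $i,j$.

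Second, I would specialize to $j=\arg\max f$, where $p_f(j) = \Delta_f = f_{\max}^2/\|f\|_2^2$, and use the identity $\delta_f = f_{\min}^2/\|f\|_2^2$ that comes from the choice $p_f(i)\propto f_i^2$. Plugging these in and simplifying,
\[
\mathbb{E} T_{hit} \;\le\; \frac{M^r}{\delta_f^{r-1}\Delta_f}
\;=\; \frac{M^r\,\|f\|_2^{2r}}{f_{\min}^{2(r-1)}\,f_{\max}^2}
\;=\; \frac{(M\|f\|_2^2)^r}{f_{\max}^2\,f_{\min}^{2(r-1)}},
\]
which is the claimed inequality~\eqref{eq_hittime_main1}.

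The only step requiring any real care is verifying that the previous theorem's derivation of $P^r_{ij}\ge \frac{\delta_f^{r-1}}{M^r}p_f(j)$ can be invoked as a black box here; this in turn rests on the envelope condition $Q^\prime_{ij}\ge \frac{1}{M}p_f(j)$ from Lemmas~\ref{lem_dom_ksmth}--\ref{lem_dom_apprx} feeding through the $\min$ in equation~\eqref{eq_LRW_RijFormula}, which is exactly what the convergence proof established. I do not expect any genuine obstacle: the computation is essentially a substitution exercise, and the main bookkeeping is keeping track of the translation between $\delta_f,\Delta_f$ and $f_{\min},f_{\max},\|f\|_2$ induced by the target density $p_f(i)\propto f_i^2$.
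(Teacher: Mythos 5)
Your proposal is correct and follows essentially the same route as the paper: the paper's proof simply invokes the analysis of Theorem~\ref{Thm_exphit_ExpRW} with $\theta = 1-\delta_f^{r-1}/M^r$, which is precisely what you spell out (Aldous's Lemma 2.12, the decomposition $P^{rs+m}-P_f=\theta^s(B^sP^m-P_f)$, the geometric series, and the substitutions $\delta_f=f_{\min}^2/\|f\|_2^2$, $\Delta_f=f_{\max}^2/\|f\|_2^2$). The only caveat---inherited verbatim from the paper's own Theorem~\ref{Thm_exphit_ExpRW}---is that summing $\theta^{\lfloor t/r\rfloor}$ over $t$ actually gives $r/(1-\theta)$ rather than $1/(1-\theta)$, so both your bound and the paper's arguably omit a factor of $r$.
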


\begin{proof}
Following the analysis of Theorem \ref{Thm_exphit_ExpRW} with $\theta = 1 - \frac{\delta_f^{r-1}}{M^r}$, we have
\begin{align*}
p_f(j)\mathbb{E}_i T_j
\leq \frac{1}{1-\theta}
= \frac{M^r}{\delta_f^{r-1}},
\end{align*}
where $\mathbb{E}_i T_j$ is the expected hitting time to reach node $j$, starting from node $i$. For $T_{hit}$, we substitute $j = i_{\max}$, so that
\begin{equation*}
p_f(i_{\max}) = \max_{i}p_f(i) = \Delta_f.
\end{equation*}
Hence,
\begin{align*}
\mathbb{E}T_{hit} 
\leq \frac{M^r}{\delta_f^{r-1}\Delta_f}
= \frac{(M||f||^2)^r}{f_{\max}^2 f_{\min}^{2(r-1)}}.
\end{align*}
\end{proof}

The high-probability bound on expected time would be same as that in Theorem~\ref{thm_highprob_ExpRW}, with
\begin{equation*}
t_{hit}^\star = \frac{(M||f||^2)^r}{f_{\max}^2 f_{\min}^{2(r-1)}}.
\end{equation*}


\section{Experiments}
\label{SecExperiments}

In this section, we consider various graph topologies and approximately $k$-smooth functions defined on the nodes of the graphs and compare the algorithms described in Section~\ref{SecAlgorithms}. 
\begin{figure}
\centering
\label{fig:fig_example_20smth}
\caption{An example run of the algorithms on a $20$-smooth function on a $32\times 32$ 2D grid graph. White pixels in rows 2--5 indicate nodes that have been visited by the algorithm. White pixels in the function plot denote high function values. Notice that the Laplacian RW is very effective at traversing regions of the graph where the function takes large values. The exponential RW with $\gamma = 1$ is also effective at reaching the function peaks located at the coordinates (15,10). The vanilla random walk covers a large area but without any preferred direction, as expected.}
\includegraphics[scale=0.6]{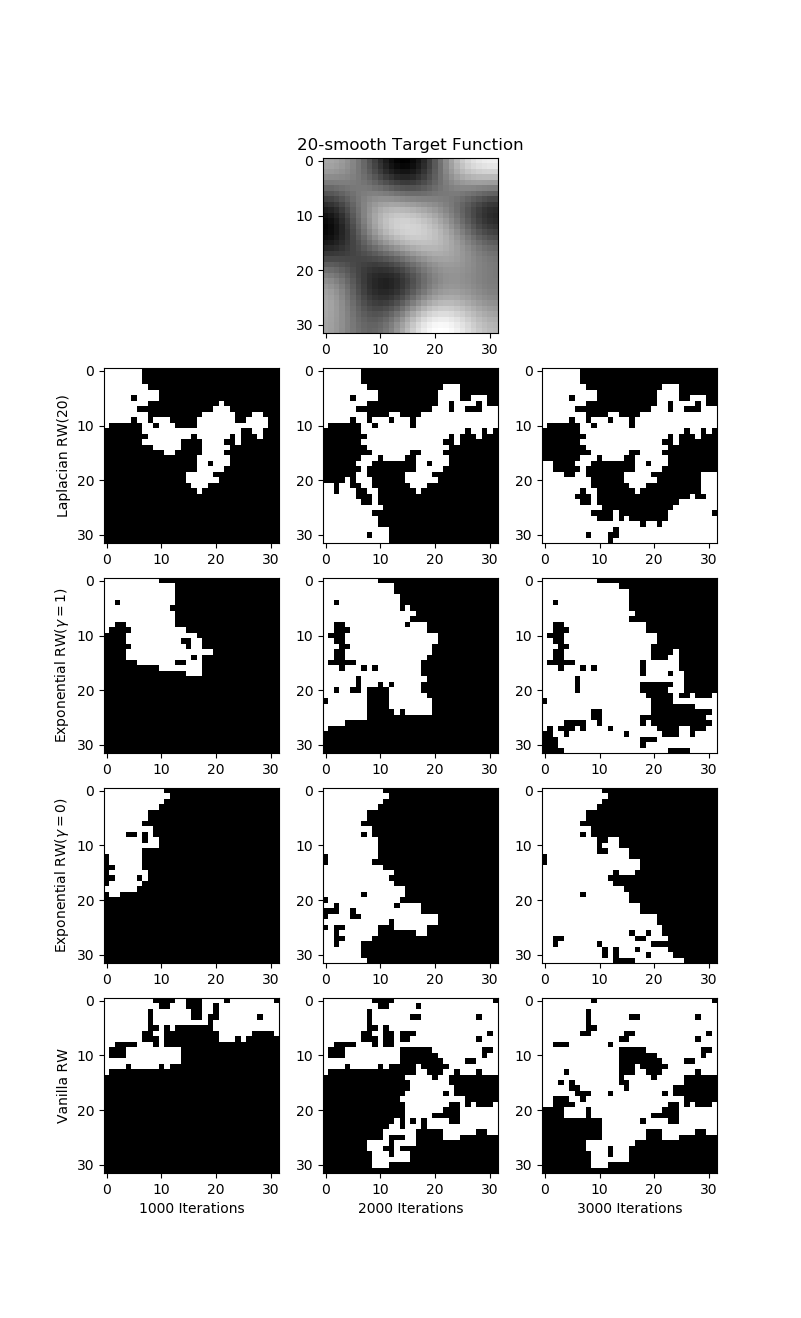}
\end{figure}
We simulated Algorithms 1, 2 and 3 on the following graph models: (i) 2D Grid graph, (ii) Erd\H{o}s-Renyi (ER) graph, (ii) and Barabasi-Albert (BA) graph. For the 2D grid graph, we ran the algorithms on a $32 \times 32$ grid of $n=1024$ nodes. For the ER graph, we generated a random graph with $n=1000$ nodes with the probability of an edge between any two nodes being $p = \frac{1.1\log n}{n}$, and discarded the graphs with isolated nodes so that the entire graph formed one connected component. For the BA graph, we generated a random graph with $n=1000$ nodes starting with a seed set of $m=3$ nodes.

For each of these graphs, we generated random smooth functions for varying values of $k$. For each value of $k$, we sampled $\ba$ from a $k\times 1$ vector of standard normal random variables, and constructed the smooth function as $f = U_k\ba$, where $U\in \mathbb{R}^{n \times k}$ is the matrix containing the top $k$ eigenvectors of the graph Laplacian matrix. The functions were then made nonnegative by lifting the function value at every node by the minimum. Since the constant vector is the first eigenvector of the graph Laplacian, this process of adding a constant vector does not affect the smoothness of the function.

The algorithms were run till the random walk ``hit" the maximum, or up to a maximum of $10,000$ steps. The hitting times were then averaged over $100$ iterations for each value of $k$, and for $10$ randomly chosen functions (i.e., a total of $1000$ iterations for each value of $k$). For Algorithm~\ref{alg_rw_gamma}, we ran the experiments with two different values of $\gamma$. The case $\gamma=0$ corresponds to a function-agnostic random walk that converges to a uniform stationary distribution across all the nodes. The case $\gamma = 1$ corresponds to a moderately greedy random walk that converges to a stationary distribution proportional to $e^{\gamma f}$. A high value of $\gamma$ would reduce Algorithm \ref{alg_rw_gamma} to a `greedy' walk that gets stuck in local minima, which is not desirable. For instance, in our experiments, $\gamma=10$ led to very poor performance due the algorithm often being unable to move out of local minima. 

Figures~\ref{fig:fig_k_ER}, \ref{fig:fig_k_BA}, and~\ref{fig:fig_k_2D}, report the results. As can be seen, the Laplacian walk of Algorithm~\ref{alg_rw_k} consistently outperformed all other algorithms for all the three classes of graphs we considered. It is also worth noting that changing the comparison criterion from hitting the global maximum to hitting the top $1\%$ of nodes did not change the performance of the algorithms in our simulations.

\begin{figure}
\centering
\caption{Comparison of hitting times for Erd\H{o}s-Renyi graphs as a function of smoothness}
\includegraphics[scale=0.7]{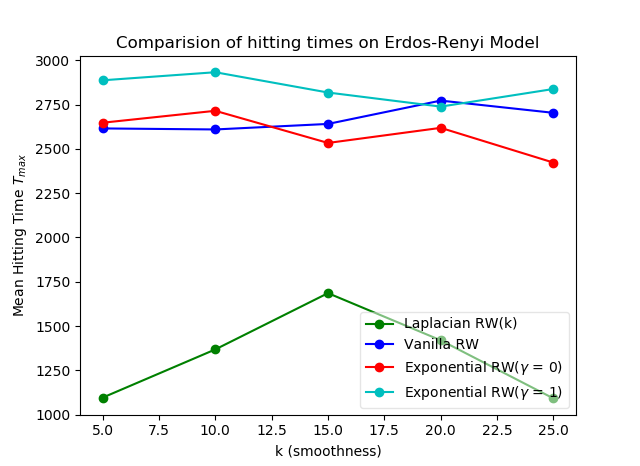}
\label{fig:fig_k_ER}
\end{figure}

\begin{figure}
\centering
\caption{Comparison of hitting times for Barabasi-Albert graphs as a function of smoothness}
\includegraphics[scale=0.7]{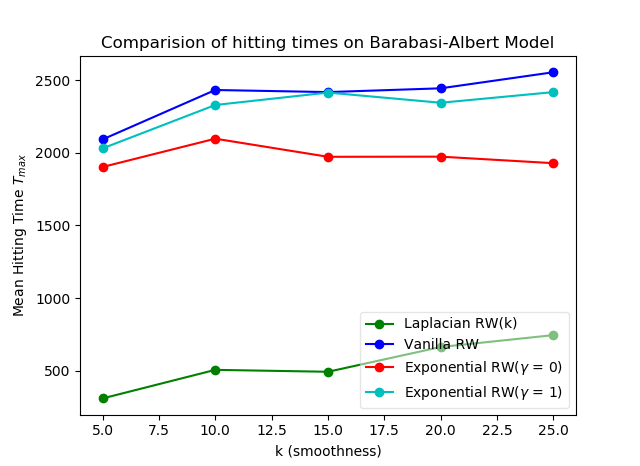}
\label{fig:fig_k_BA}
\end{figure}

\begin{figure}
\centering
\caption{Comparison of hitting times for 2D grid graphs as a function of smoothness}
\includegraphics[scale=0.7]{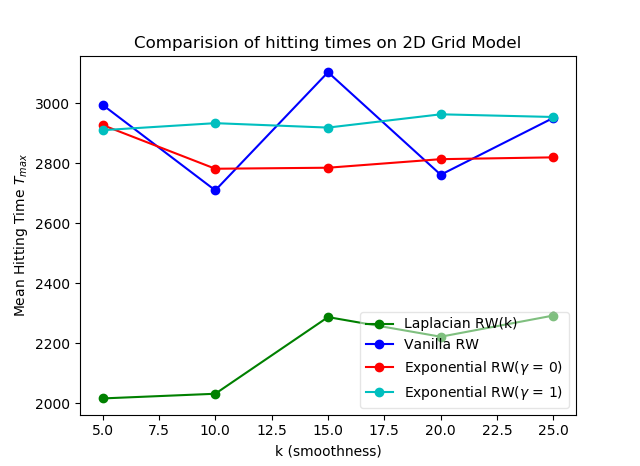}
\label{fig:fig_k_2D}
\end{figure}


\section{Discussion}
\label{SecDiscussion}

We have presented two new algorithms for graph function maximization based on local movements around the nodes of a graph. Our first algorithm concerns an exponentially weighted random graph, and our second algorithm uses information about the spectrum of the graph Laplacian matrix. We have provided theoretical results concerning the rate of convergence of our algorithms in terms of total variation distance and hitting time when the graph function belongs to a certain smoothness class.

The algorithms we have studied in this paper only involve local movements along edges, from one node to an adjoining neighbor. However, one might imagine variants of the algorithms that allow ``jump" movements that reinitialize the walk at a randomly drawn vertex, either uniformly selected from all the nodes or all the nodes previously explored in previous time steps. It would be interesting to see how incorporating the option of a jump move might affect the convergence analysis of the algorithms.

Another question to explore is the tightness of our bounds on rates of convergence of the algorithms. Our results are derived based on a seemingly coarse analysis, and it would be interesting to see if it is possible to find worst-case graphs and classes of smooth functions for which one can also derive lower bounds for the rate of convergence of any local algorithm. This also gives rise to the important related question of the landscape of local and global optima of a $k$-smooth function, where the smoothness is defined in terms of eigenvalues of the graph Laplacian. This appears to be a challenging problem even for $k=2$ and for Erd\H{o}s-Renyi random graphs. Ideally, we would also like to be able to translate the bounds on total variation distance and hitting time into precise recommendations regarding the number of iterates required for each of our algorithms to locate a maximum.

A final question concerns improving the convergence rate of local algorithms when the next iterate is allowed to depend on the values of the function in a neighborhood of radius $r$ around the current iterate. The algorithms described in this paper are limited to the case when $r=1$. However, if the local algorithm were given information about a larger neighborhood at each step, perhaps it would be possible to devise an analog of higher-order descent algorithms, which are known to exhibit faster rates of convergence in continuous optimization settings.

\bibliographystyle{plain}
\bibliography{ref}

\end{document}